\documentclass[11pt,twocolumn]{article}

\usepackage[T1]{fontenc}
\usepackage[utf8]{inputenc}
\usepackage{lmodern}
\usepackage[margin=0.70in]{geometry}
\usepackage{microtype}
\usepackage{amsmath,amssymb,amsthm,mathtools}
\usepackage{aliascnt}
\usepackage{algorithm}
\usepackage[noend]{algpseudocode}
\usepackage{booktabs}
\usepackage{tikz}
\usetikzlibrary{arrows.meta,calc,positioning}
\usepackage[numbers,sort&compress]{natbib}
\usepackage{titling}
\usepackage{titlesec}
\usepackage[font=small,skip=4pt]{caption}
\usepackage{hyperref}
\usepackage{doi}
\usepackage{orcidlink}
\usepackage[nameinlink,noabbrev]{cleveref}
\hypersetup{
  colorlinks=true,
  linkcolor=blue,
  filecolor=magenta,
  urlcolor=cyan,
  citecolor=orange,
  pdftitle={Random-Priority Frontier Routing: Tight Theta(n\string^c) Bounds Against c-Node Cartels},
  pdfauthor={Krisjanis Petrucena}
}

\posttitle{\par\end{center}\vspace{-0.4em}}
\postauthor{\end{tabular}\par\end{center}\vspace{-1.5em}}
\predate{}
\postdate{}

\titlespacing*{\section}{0pt}{1.6ex plus .2ex minus .2ex}{0.9ex plus .2ex}
\titlespacing*{\subsection}{0pt}{1.3ex plus .2ex minus .2ex}{0.6ex plus .2ex}

\makeatletter
\def\thm@space@setup{%
  \thm@preskip=0.6\baselineskip
  \thm@postskip=\thm@preskip}
\makeatother

\AtBeginDocument{%
  \setlength{\abovedisplayskip}{5pt plus 2pt minus 2pt}%
  \setlength{\belowdisplayskip}{5pt plus 2pt minus 2pt}%
  \setlength{\abovedisplayshortskip}{2pt plus 2pt}%
  \setlength{\belowdisplayshortskip}{3pt plus 2pt minus 2pt}%
}

\newtheorem{theorem}{Theorem}
\newaliascnt{proposition}{theorem}
\newtheorem{proposition}[proposition]{Proposition}
\aliascntresetthe{proposition}
\newaliascnt{corollary}{theorem}
\newtheorem{corollary}[corollary]{Corollary}
\aliascntresetthe{corollary}
\newaliascnt{lemma}{theorem}

\aliascntresetthe{lemma}
\theoremstyle{definition}
\newaliascnt{definition}{theorem}

\aliascntresetthe{definition}
\theoremstyle{remark}
\newaliascnt{remark}{theorem}

\aliascntresetthe{remark}

\newcommand{\RPFR}{\textnormal{\textsc{rpfr}}}
\newcommand{\HS}{\textnormal{\textsc{hs}}}

\DeclareMathOperator*{\argmax}{arg\,max}

\title{\textbf{Random-Priority Frontier Routing:}\\
Tight \(\boldsymbol{\Theta(n^c)}\) Bounds Against \(\boldsymbol{c}\)-Node Cartels}
\author{Kri\v{s}j\=anis Petru\v{c}e\c{n}a\,\orcidlink{0009-0008-5713-5914}\\
\small Institute of Mathematics and Computer Science of the University of Latvia\\
\small \texttt{krisjanis.petrucena@lumii.lv}}
\date{}

\begin{document}
\raggedbottom
\maketitle

\begin{abstract}
We study path diversification in trusted-node networks, where sensitive material is relayed through intermediate nodes, some of which may be compromised.
Our randomized routing rule assigns each vertex an independent random priority and repeatedly expands the highest-priority vertex on the global frontier of the explored region.
Let \(G\) have \(n\) vertices, let \(s,t\) be honest endpoints, and let \(C\) be a set of \(c\) compromised intermediate vertices, called a cartel, whose deletion leaves \(s\) and \(t\) connected.
For every fixed \(c\) and every fixed target probability \(q\in(0,1)\), we prove that \(\Theta(n^c)\) independent executions are sufficient in the worst case for some route to avoid \(C\) with probability at least \(q\).
\end{abstract}

\section{Introduction}
\label{sec:introduction}

Trusted-node quantum key distribution (QKD) networks relay key material when the communicating endpoints do not share a direct quantum link.
Hop-by-hop encryption does not prevent a compromised intermediate node from observing relayed material, so path diversification and privacy amplification are used to reduce reliance on any one relay \citep{salvail2010trusted,dervisevic2025keymanagement}.

Our preceding work studied topology-oblivious random-walk key relaying and introduced a highest-score-neighbor rule, denoted \(\HS\) \citep[Sec.~3.8]{petrucena2026topology}.
For each key material fragment, a fresh seed defines scores of vertex identifiers, and a relay node forwards the fragment to its highest-scored unvisited neighbor.
The rule was evaluated empirically rather than given a graph-independent cartel-avoidance guarantee.
\Cref{prop:local-failure} shows that its \emph{avoidance} probability can be exponentially small.

This note asks what changes when the same random ranking drives \emph{global-frontier} exploration.
Starting from \(s\), the process maintains a connected visited set \(S\); at every step it selects the highest-ranked unvisited vertex adjacent to \(S\), moves through the visited subgraph to reach an adjacent visited vertex, and expands \(S\).
We call the resulting procedure \emph{random-priority frontier routing} (\(\RPFR\)).
Global extremal-frontier growth is familiar from bond and site invasion percolation \citep{wilkinson1983invasion,chayes1985stochastic,glantz2008invasion}.
\(\RPFR\) applies that mechanism to an abstract vertex ranking with an adversarial objective and a designated stopping target.
Greedy random walk \citep{orenshtein2014greedy} is a different local rule, selecting uniformly among untraversed incident edges, and its cover-time results imply no cartel-avoidance bound.

Let \(G\) have \(n\) vertices, let \(s,t\) be honest, and let \(C\) be a static set of \(c\) intermediate vertices whose deletion does not disconnect \(s\) from \(t\).
Then
\[
  \Pr(\text{\(\RPFR\) avoids \(C\)})
  \geq \binom{n-1}{c}^{-1},
\]
and we construct a graph attaining equality.
If executions use fresh independent rankings and \(C\) is fixed, the worst-case number required for a constant probability of at least one cartel-free route is therefore \(\Theta\!\binom{n-1}{c}\), which is \(\Theta(n^c)\) for every fixed \(c\).
This is a routing guarantee: it bounds whether a route visits a cartel vertex, not whether an avoided route yields secret key.

\section{Model}
\label{sec:model}

Let \(G=(V,E)\) be a simple undirected graph with \(n=\lvert V\rvert\), and let \(s,t\in V\) be the distinct source and target.
A \emph{cartel} is a set \(C\subseteq V\setminus\{s,t\}\) of \(c\) vertices, where \(0\leq c\leq n-2\).
We call \(C\) \emph{admissible} if \(G-C\) contains an \(s\)--\(t\) path.
Cartels that separate \(s\) from \(t\) are excluded because no routing procedure can produce an \(s\)--\(t\) route avoiding such a cut.

The cartel is \emph{static}: it is selected independently of the execution's random priorities and is not changed after priorities are revealed.
Relayed material is intercepted exactly when the route visits a vertex of \(C\) before first visiting \(t\).

The routing procedure does not know \(G\) in advance.
Adjacency is revealed only from vertices that have already been visited, as in the topology-oblivious setting of the predecessor protocol.
Authenticated vertex identifiers are fixed before the fresh execution randomness is chosen, and scores are generated by a trusted controller or derived by it from the fresh seed and those identifiers, so a cartel cannot choose its scores after learning honest scores.
Discovered adjacency and frontier data are trusted.
Cartel vertices may observe material that visits them, but cannot forge priorities, impersonate vertices, suppress honest edges from the discovered frontier, or alter the algorithm.
Without these control-plane assumptions, a malicious relay could bias frontier selection and the ranking argument below would not apply.

\subsection{Random-priority frontier routing}

Before an execution, assign each vertex \(u\in V\setminus\{s\}\) a random score \(R(u)\), independent and identically distributed according to a continuous distribution.
Only relative score order matters.
The i.i.d.\ continuous scores induce a uniformly random ordering of \(V\setminus\{s\}\), so \(R(s)\) need not be sampled.
For \(S\subseteq V\), let \(G[S]\) denote the subgraph induced by \(S\), and let \(F(S)=N(S)\setminus S\) denote its open vertex frontier.
The procedure is given in \cref{alg:rpfr}.

\begin{algorithm}[H]
  \caption{Random-priority frontier routing}
  \label{alg:rpfr}
  \begin{algorithmic}[1]
    \Require Graph access revealed from visited vertices, connected endpoints \(s,t\), scores \(R\)
    \State \(S\gets\{s\}\)
    \While{\(t\notin S\)}
      \State \(v\gets\argmax_{u\in F(S)}R(u)\)
      \State choose any \(w\in S\cap N(v)\)
      \State move from the current location to \(w\) along a path in \(G[S]\)
      \State traverse \(\{w,v\}\) and set \(S\gets S\cup\{v\}\)
    \EndWhile
    \State \Return the exploration trace
  \end{algorithmic}
\end{algorithm}

Each selected frontier vertex has a neighbor in \(S\), so \(G[S]\) remains connected and relocation within \(S\) is always possible.

If \(s\) and \(t\) are connected, the frontier is nonempty whenever \(t\notin S\), and each iteration adds one vertex, so \(\RPFR\) visits \(t\) after at most \(n-1\) expansions.
The choice of \(w\) and of the path inside \(G[S]\) changes traversal cost but not the order in which new vertices enter \(S\).
The target receives no direct-neighbor preference: it is selected only when its score is maximal on the whole frontier.
Consequently \(R(t)\) participates in the random ordering, which is why the theorem below contains \(\binom{n-1}{c}\) rather than \(\binom{n-2}{c}\).

An implementation can use a fresh per-execution seed and derive each score from the seed and authenticated vertex identifier when the vertex first enters the frontier, without knowing \(V\) in advance, adapting the per-fragment \(\HS\) construction of \citet[Sec.~3.8]{petrucena2026topology}.\footnote{That replacement is computational and should not be called information-theoretic randomness.}

\section{Tight cartel-avoidance bound}
\label{sec:tight-bound}

For a fixed instance \((G,s,t,C)\), let \(p_{\mathrm{avoid}}(G,s,t,C)\) denote the probability that \(t\) is visited before any vertex of \(C\).

\begin{theorem}[Universal avoidance bound]
\label{thm:lower-bound}
For every \(n\)-vertex graph \(G\), every distinct pair \(s,t\in V\), and every admissible \(c\)-node cartel \(C\),
\[
  p_{\mathrm{avoid}}(G,s,t,C)
  \geq
  \binom{n-1}{c}^{-1}.
\]
\end{theorem}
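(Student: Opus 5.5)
The plan is to exhibit one explicit event in the random ordering that has probability exactly \(\binom{n-1}{c}^{-1}\) and that forces avoidance deterministically. As noted in \cref{sec:model}, the i.i.d.\ continuous scores induce a uniformly random ordering of the \(n-1\) vertices of \(V\setminus\{s\}\), and \(C\subseteq V\setminus\{s,t\}\) is fixed independently of that ordering. Let \(E\) be the event that the \(c\) vertices of \(C\) occupy the \(c\) lowest positions of this ordering, i.e.\ \(R(x)<R(u)\) for every \(x\in C\) and every \(u\in V\setminus(C\cup\{s\})\). By symmetry the set of the \(c\) lowest-ranked vertices is a uniformly random \(c\)-subset of \(V\setminus\{s\}\). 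Hence
\[
  \Pr(E)=\binom{n-1}{c}^{-1}.
\]
When \(c=0\) this probability is \(1\), and the argument below still shows avoidance.

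The main step is the deterministic claim that on \(E\), \cref{alg:rpfr} adds \(t\) to \(S\) before any vertex of \(C\). I would prove by induction on the iterations the invariant that \(S\cap C=\emptyset\) at the start of each iteration with \(t\notin S\). The invariant holds initially since \(S=\{s\}\).

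For the inductive step, use admissibility. Fix an \(s\)--\(t\) path \(P\) in \(G-C\). Since \(s\in S\) and \(t\notin S\), \(P\) has a first vertex \(u\notin S\). Its predecessor on \(P\) lies in \(S\), so \(u\in F(S)\). Moreover \(u\notin C\), because \(P\) avoids \(C\). Thus the frontier always contains a non-cartel vertex other than \(s\). On \(E\), that vertex has a higher score than every cartel vertex. So the \(\argmax\) over \(F(S)\) is not in \(C\), and the invariant is preserved. Termination is already guaranteed by the model section, so \(t\) eventually enters \(S\) while \(S\cap C=\emptyset\). Note that \(t\) itself may be chosen at any point; this does no harm, because interception is defined as visiting \(C\) before \(t\).

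The relocation moves inside \(G[S]\) only traverse vertices already in \(S\). Under the invariant these are cartel-free, so the physical route also avoids \(C\). Combining the two parts gives \(p_{\mathrm{avoid}}\ge\Pr(E)=\binom{n-1}{c}^{-1}\).

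The only delicate point I expect is the frontier claim: a cartel-free frontier vertex must exist at every step. This is exactly where admissibility enters, via the first exit of \(P\) from \(S\), and where the inductive invariant \(S\cap C=\emptyset\) is needed.
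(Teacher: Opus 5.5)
Your proposal is correct and follows the paper's proof essentially step for step. Both use the event that \(C\) occupies the bottom \(c\) positions, which has probability \(\binom{n-1}{c}^{-1}\), and both run the same induction: the first vertex outside \(S\) of a fixed \(s\)--\(t\) path in \(G-C\) is an honest frontier vertex that outranks every cartel vertex on that event.
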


\begin{proof}
If \(c=0\), then \(C=\varnothing\), both sides equal one, and the result is immediate.
Assume \(c\geq 1\), let \(H=V\setminus(C\cup\{s\})\) be the \(n-1-c\) honest non-source vertices, including \(t\), and consider the event
\[
  \mathcal{E}
  =
  \Bigl\{
    \max_{v\in C}R(v)<\min_{u\in H}R(u)
  \Bigr\}.
\]
Among the \(n-1\) ranked non-source vertices, \(\mathcal{E}\) occurs exactly when the fixed set \(C\) occupies the bottom \(c\) positions.
Every \(c\)-subset is equally likely to occupy those positions, hence \(\Pr(\mathcal{E})=\binom{n-1}{c}^{-1}\).

It remains to show that \(\mathcal{E}\) implies avoidance.
Suppose that the current visited set \(S\) contains \(s\), contains no cartel vertex, and does not contain \(t\).
Because \(C\) is admissible, fix an \(s\)--\(t\) path in \(G-C\).
The first vertex of this path outside \(S\) has a predecessor in \(S\), so it is an honest vertex in \(F(S)\).
On \(\mathcal{E}\), every honest frontier vertex has a higher score than every cartel vertex, so the maximizer selected by \(\RPFR\) is honest.
Starting with \(S=\{s\}\), induction shows that no cartel vertex is visited before \(t\).
Thus \(\mathcal{E}\) is contained in the avoidance event.
\end{proof}

The sufficient event in the proof may be stronger than necessary on a particular graph.
The next construction shows that no larger graph-independent guarantee is possible.

\begin{theorem}[Tightness]
\label{thm:tightness}
For every \(n\geq 2\) and \(0\leq c\leq n-2\),
the minimum below ranging over all simple undirected \(n\)-vertex graphs, distinct endpoints, and admissible \(c\)-node cartels satisfies
\[
  \min p_{\mathrm{avoid}}(G,s,t,C)
  =
  \binom{n-1}{c}^{-1}.
\]
\end{theorem}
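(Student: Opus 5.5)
The inequality \(\min p_{\mathrm{avoid}}\geq\binom{n-1}{c}^{-1}\) is exactly \cref{thm:lower-bound}. So the plan is to exhibit, for every \(n\geq 2\) and \(0\leq c\leq n-2\), one instance \((G,s,t,C)\) whose avoidance probability equals \(\binom{n-1}{c}^{-1}\). I would build \(G\) so that the sufficient event \(\mathcal{E}\) from the proof of \cref{thm:lower-bound} is also necessary for avoidance. That requires every cartel vertex to sit on the frontier throughout the run, while honest vertices are revealed one at a time.

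\emph{Construction.} Let \(m=n-1-c\geq 1\), and let the honest non-source vertices be \(h_1,\dots,h_m\) with \(h_m=t\). Let \(C=\{x_1,\dots,x_c\}\). The edge set consists of:
\begin{itemize}
  \item the path \(s h_1 h_2\cdots h_m\);
  \item the edges \(\{s,x_i\}\) for \(i=1,\dots,c\).
\end{itemize}
This graph is simple and has \(n\) vertices. Moreover, \(C\subseteq V\setminus\{s,t\}\), and \(C\) is admissible because the path avoids \(C\). The case \(c=0\) is included and is trivial, since both sides equal one.

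\emph{Analysis.} I would show by induction that, as long as no cartel vertex has been visited, the visited set is \(S=\{s,h_1,\dots,h_k\}\) for some \(k<m\). For such an \(S\), the frontier is exactly
\[
F(S)=C\cup\{h_{k+1}\}.
\]
This holds because every \(x_i\) is adjacent to \(s\), and the only other edges leaving \(S\) run along the path. Hence the next step is honest if and only if \(R(h_{k+1})>\max_{v\in C}R(v)\).

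The run therefore avoids \(C\) if and only if this inequality holds for each \(k=0,\dots,m-1\). That is,
\[
\min_{1\leq j\leq m}R(h_j)>\max_{v\in C}R(v),
\]
which is precisely the event \(\mathcal{E}\). Its probability was computed in \cref{thm:lower-bound} to be \(\binom{n-1}{c}^{-1}\).

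Combining this exact value with the universal lower bound gives equality for the minimum. The only point requiring care is the converse direction, namely that avoidance forces \(\mathcal{E}\). This is exactly where the construction matters: every cartel vertex stays on the frontier at every step, so each honest vertex, including \(t\), is compared against all of \(C\) at the moment it is chosen.
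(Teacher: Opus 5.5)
Your proposal is correct and follows essentially the same route as the paper. Both use an honest path from \(s\) to \(t\) together with cartel vertices that stay on the frontier at every step, so that avoidance is equivalent to the event \(\mathcal{E}\) that \(C\) occupies the bottom \(c\) ranks. The only difference is cosmetic: the paper joins each cartel vertex to both \(s\) and \(v_1\), so that the cartel vertices lie on \(s\)--\(t\) paths and act as genuine relays, whereas you attach them as pendant leaves at \(s\), which is equally valid under the formal definition of an admissible cartel.
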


\begin{proof}
The lower bound is \cref{thm:lower-bound}, and for \(c=0\) every execution avoids the empty cartel, so equality holds.
Assume \(c\geq 1\), let \(q=n-c-1\), and take the honest path \(s=v_0,v_1,\ldots,v_q=t\).
For each \(v\in C\), add the edges \(\{s,v\}\) and \(\{v,v_1\}\), as in \cref{fig:tight-instance}.
Deleting \(C\) leaves the honest path, so \(C\) is admissible.

Until either \(t\) or a cartel vertex is visited, the frontier consists of all vertices of \(C\) and the next unvisited path vertex \(v_i\).
Therefore \(\RPFR\) avoids \(C\) if and only if \(R(v_i)>\max_{v\in C}R(v)\) for every \(i\in\{1,\ldots,q\}\).
Equivalently, all vertices of \(C\) occupy the bottom \(c\) positions in the random ordering of \(V\setminus\{s\}\), an event of probability \(\binom{n-1}{c}^{-1}\).
\end{proof}

\begin{figure}[H]
  \centering
  \begin{tikzpicture}[
    scale=0.85,
    vertex/.style={circle,draw,minimum size=6mm,inner sep=0pt,font=\small},
    cartel/.style={circle,draw,dashed,minimum size=6mm,inner sep=0pt,font=\small},
    >=Latex
  ]
    \node[vertex] (s) {\(s\)};
    \node[vertex,right=15mm of s] (v1) {\(v_1\)};
    \node[right=12mm of v1] (dots) {\(\cdots\)};
    \node[vertex,right=12mm of dots] (vq) {\(v_q=t\)};
    \node[cartel,above=13mm of $(s)!0.5!(v1)$] (b1) {\(b_1\)};
    \node[above=8mm of b1] (bdots) {\(\vdots\)};
    \node[cartel,above=8mm of bdots] (bc) {\(b_c\)};
    \draw (s)--(v1)--(dots)--(vq);
    \draw[dashed] (s)--(b1)--(v1);
    \draw[dashed] (s)--(bc)--(v1);
  \end{tikzpicture}
  \caption{A schematic tight instance for \(2\leq c\leq n-3\).
  The solid vertices form the honest path \(s=v_0,v_1,\ldots,v_q=t\), where \(q=n-c-1\).
  Each cartel vertex is adjacent to both \(s\) and \(v_1\); omitted cartel vertices have the same adjacencies.
  The construction itself also covers the boundary cases.}
  \label{fig:tight-instance}
\end{figure}

\subsection{Independent executions}

Suppose that \(C\) remains fixed, that \(k\) executions use mutually independent fresh rankings, and set \(M=\binom{n-1}{c}\).
By \cref{thm:lower-bound}, the material of each execution is intercepted with probability at most \(1-M^{-1}\), so independence gives
\[
  \Pr(\text{all \(k\) routes intercepted})
  \leq
  \left(1-\frac{1}{M}\right)^{\!k},
\]
with equality for the construction in \cref{thm:tightness}.

\begin{corollary}[Constant-success threshold]
\label{cor:reruns}
For \(1\leq c\leq n-2\) and fixed \(q\in(0,1)\), the worst-case number of independent executions needed to make the probability of at least one cartel-free route at least \(q\) is
\[
  k_q
  =
  \left\lceil
    \frac{\log(1-q)}{\log(1-1/M)}
  \right\rceil
  =
  \Theta(M),
\]
which for every fixed \(c\) equals \(\Theta\!\binom{n-1}{c}=\Theta(n^c)\).
\end{corollary}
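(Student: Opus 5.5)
The plan is to read off \(k_q\) from the displayed interception bound and then estimate it using elementary logarithm inequalities. Throughout, \(M=\binom{n-1}{c}\). Since \(1\leq c\leq n-2\), we have \(M\geq n-1\geq 1\), and so \(1-1/M\in[0,1)\). The case \(M=1\) can only occur when \(n=2\), which forces \(c=0\) and is excluded. Hence \(M\geq 2\) and \(\log(1-1/M)<0\).

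\emph{Step 1 (exact threshold).} By \cref{thm:lower-bound} and independence, any \(k\) executions fail on every instance with probability at most \((1-1/M)^k\). By \cref{thm:tightness}, equality holds on the instance of \cref{fig:tight-instance}. So the worst-case success probability with \(k\) runs is exactly \(1-(1-1/M)^k\), and the worst-case requirement is the least integer \(k\) with \((1-1/M)^k\leq 1-q\). Taking logarithms and dividing by the negative number \(\log(1-1/M)\) reverses the inequality, giving \(k\geq \log(1-q)/\log(1-1/M)\). The least such integer is the stated ceiling. It is worth checking that the worst case is a single instance attaining the minimum simultaneously for all \(k\); this holds because the failure probability is monotone in \(p_{\mathrm{avoid}}\).

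\emph{Step 2 (order of growth).} Use \(x\leq -\log(1-x)\leq x/(1-x)\) for \(x\in(0,1)\), with \(x=1/M\). This gives \(1/M\leq -\log(1-1/M)\leq 1/(M-1)\leq 2/M\) for \(M\geq 2\). Writing \(L=-\log(1-q)>0\), a constant for fixed \(q\), we get
\[
  \tfrac{L}{2}M \leq \frac{\log(1-q)}{\log(1-1/M)} \leq LM .
\]
Then the ceiling satisfies \(\tfrac{L}{2}M\leq k_q\leq LM+1\leq (L+1)M\), so \(k_q=\Theta(M)\).

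\emph{Step 3 (polynomial form).} For fixed \(c\) and \(n\geq c+2\), we have \(\binom{n-1}{c}\leq (n-1)^c/c!\leq n^c\). For the lower bound, \(\binom{n-1}{c}\geq ((n-1)/c)^c\geq (n/(2c))^c\), since \(n-1\geq n/2\) for \(n\geq 2\). Hence \(M=\Theta(n^c)\) with constants depending only on \(c\).

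The main subtlety is Step 1. One must argue that the "worst case" is realized by the tightness instance, so that \(k_q\) is an exact value and not merely an upper bound. The remaining steps are routine inequalities.
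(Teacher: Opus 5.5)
Your proposal is correct and follows the paper's argument. Like the paper, you solve \((1-1/M)^k\le 1-q\), use the instance of \cref{thm:tightness} for necessity, use \(-\log(1-1/M)=\Theta(1/M)\), and bound \(\binom{n-1}{c}\) polynomially; the only difference is that you make the constants explicit (\(\tfrac{L}{2}M\le k_q\le (L+1)M\) and \((n/(2c))^c\le M\le n^c\)) where the paper leaves them implicit.
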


\begin{proof}
The exact expression follows by solving \((1-1/M)^k\leq 1-q\), and it is necessary on the tight instance.
Since \(-\log(1-1/M)=\Theta(1/M)\), the expression is \(\Theta(M)\).
For fixed \(c\), the product formula \(\binom{n-1}{c}=(n-1)(n-2)\cdots(n-c)/c!\) gives \(\Theta(n^c)\).
\end{proof}

The necessity statement concerns independent nonadaptive reruns of \(\RPFR\); it is not a lower bound for arbitrary routing algorithms or for a protocol that learns cartel identities between executions.

\section{Comparisons and cost}
\label{sec:implications}

In the published \(\HS\) rule, each fragment carries a fresh score seed and visit history \citep[Secs.~3.2 and~3.8]{petrucena2026topology}.
The current relay forwards directly to \(t\) when \(t\) is adjacent; otherwise it selects the highest-scored unvisited neighbor and falls back to a non-backtracking step when every neighbor has been visited.
Its scored decisions therefore range over the current neighborhood rather than the entire frontier.

\begin{proposition}[Exponential local failure]
\label{prop:local-failure}
For every \(m\geq 2\) there is a biconnected graph on \(n=2m+2\) vertices whose single cartel vertex \(w\) is adjacent to every other vertex, such that in the ideal i.i.d.\ continuous-score model local \(\HS\) avoids \(w\) with probability
\[
  \frac{2^{m-1}\bigl((m-1)!\bigr)^2}{(2m-1)!}
  \sim
  \frac{\sqrt{\pi}}{2^{m}\sqrt{m}}
  =2^{-\Omega(n)}.
\]
\end{proposition}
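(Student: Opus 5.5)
The plan is to build an explicit graph in which local \(\HS\) faces a sequence of \(m-1\) independent-looking binary decisions, each coupled to the single score \(R(w)\) because \(w\) is adjacent to every vertex. Conditioning on \(R(w)\) should turn the avoidance probability into a one-dimensional Beta-type integral that evaluates to the stated closed form. The target expression suggests this shape, since by Wallis
\[
  \int_0^1 (1-x^2)^{m-1}\,dx=\frac{4^{m-1}\bigl((m-1)!\bigr)^2}{(2m-1)!},
\]
and the claimed probability equals this integral times \(2^{-(m-1)}\).

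For the construction I would use \(s\), \(t\), \(w\) and \(2m-1\) further honest vertices arranged in a ladder-like chain. At each of \(m-1\) stages the current relay has exactly two fresh honest neighbours besides \(w\) (and \(t\) is not adjacent).
\begin{itemize}
\item One of these neighbours, the ``good'' one, continues the chain.
\item The other, the ``trap'', leads to a vertex whose only remaining unvisited neighbour is \(w\). If \(\HS\) enters the trap, it is forced into \(w\), either directly or through the non-backtracking fallback.
\end{itemize}
The last honest vertex is adjacent to \(t\), so \(\HS\) then forwards directly. I would add enough extra edges to make the graph biconnected: \(w\) is universal, and each trap keeps a second edge back into the chain. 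These edges must be placed so that no stage sees more than the two intended fresh neighbours. Checking this is routine bookkeeping, but it must hold exactly.

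Given the graph, I would normalise the scores to be i.i.d.\ uniform on \([0,1]\) and condition on \(R(w)=x\). At stage \(j\) the two fresh honest scores are independent of everything revealed earlier, so the stages factor. Each stage is survived exactly when the good vertex has the larger score of the two and that score exceeds \(x\), which has probability \(\tfrac12(1-x^2)\). Integrating \(\prod_{j=1}^{m-1}\tfrac12(1-x^2)\) over \(x\in[0,1]\) and applying the Wallis identity above gives \(2^{m-1}((m-1)!)^2/(2m-1)!\). Stirling's formula, or the standard asymptotic \(4^k(k!)^2/(2k+1)!\sim\tfrac12\sqrt{\pi/k}\), then yields \(\sqrt{\pi}/(2^m\sqrt m)\). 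Since \(n=2m+2\), this is \(2^{-\Omega(n)}\).

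The main obstacle is the design step. The graph must make the stage events genuinely independent given \(R(w)\), with fresh vertices at every stage and no reuse of previously seen scores. It must also make every wrong choice lead deterministically to \(w\), while keeping the graph biconnected with exactly \(2m+2\) vertices. If the trap gadget cannot be realised within the vertex budget, my first fallback is to have each trap be adjacent only to its stage vertex, \(w\), and the next stage vertex. Then the non-backtracking fallback from a trap has \(w\) as its highest fresh option exactly on the failing event. I would re-derive the per-stage probability for that variant and confirm it still equals \(\tfrac12(1-x^2)\).
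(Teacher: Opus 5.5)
Your probabilistic half is correct and matches the paper's: condition on \(R(w)=z\), survive each stage with probability \((1-z^2)/2\), integrate, then apply Stirling. The gap is the graph itself. For an existence claim the graph is the whole content, you never produce one, and the concrete design advice you give would break the gadget.

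You plan to secure biconnectivity by giving each trap ``a second edge back into the chain''. Any honest neighbour of a trap \(y_i\) other than its stage vertex \(x_i\) breaks one of your two invariants:
\begin{itemize}
\item If that neighbour is unvisited when \(y_i\) is entered, \(\HS\) is no longer forced into \(w\). If the neighbour is \(t\), the direct-target rule even turns the trap into an escape.
\item If that neighbour was visited earlier, then \(y_i\) was an extra fresh option at that earlier stage.
\end{itemize}
Your fallback (trap adjacent to \(x_i\), \(w\), \(x_{i+1}\)) fails the same way. From \(y_i\) the walk continues to \(x_{i+1}\) whenever \(R(x_{i+1})>z\). When \(x_{i+1}\) is reached directly, \(y_i\) reappears there as a fresh neighbour whose score was already compared, which destroys both the three-way structure and the independence. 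Already for \(m=2\) (path \(s,x_1,t\), trap \(y_0\) adjacent to \(s,x_1,w\)) this variant avoids \(w\) exactly when \(R(x_1)>R(w)\). That has probability \(1/2\) instead of \(1/3\). Also, the non-backtracking fallback never fires in these graphs, since \(w\) is an unvisited neighbour of every vertex until interception.

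The missing observation is that no extra edges are needed. Because \(w\) is universal, \(G\) is biconnected as soon as \(G-w\) is connected: deleting any other vertex leaves \(w\) adjacent to all survivors. So, as in the paper, let \(G-w\) be the caterpillar formed by the path \(s=x_0,x_1,\ldots,x_m=t\) with a pendant leaf \(y_i\) at each \(x_i\), \(0\le i\le m-1\). Then:
\begin{itemize}
\item every trap \(y_i\) has \(w\) as its only unvisited neighbour;
\item each stage vertex \(x_i\) with \(i\le m-2\) sees exactly \(x_{i+1},y_i,w\), all with fresh scores;
\item the surplus vertex you never placed is \(y_{m-1}\). Your \(m-1\) stages use only \(2m-2\) of the \(2m-1\) extra honest vertices, and \(y_{m-1}\) is harmless because the direct-target rule fires at \(x_{m-1}\).
\end{itemize}
With this graph your computation goes through unchanged.
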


\begin{proof}
Take a path \(s=x_0,x_1,\ldots,x_m=t\), attach a leaf \(y_i\) to \(x_i\) for each \(i\in\{0,\ldots,m-1\}\), and join \(w\) to every other vertex, as in \cref{fig:local-counterexample}.
Deleting \(w\) leaves a tree, and deleting any other vertex leaves \(w\) adjacent to all survivors, so \(G\) is biconnected and \(\{w\}\) is admissible.

Because only ranks matter, the probability-integral transform lets us assume that the scores are independent and uniform on \([0,1]\).
Condition on \(R(w)=z\).
Whenever the walk stands at \(x_i\) with \(i\leq m-2\), the target is not adjacent and the unvisited neighbors are exactly \(x_{i+1}\), \(y_i\), and \(w\).
Choosing \(w\) is immediate interception, and choosing \(y_i\) forces it, since \(w\) is then the only unvisited neighbor of the leaf.
The walk therefore survives step \(i\) precisely when \(R(x_{i+1})>\max\{R(y_i),z\}\), of conditional probability \((1-z^2)/2\).
At \(x_{m-1}\) the target is adjacent and the direct-target rule takes the final edge, so that step costs no comparison.
The \(m-1\) surviving events involve pairwise disjoint score pairs \(\{x_{i+1},y_i\}\) and are conditionally independent given \(z\), so
\[
  \Pr(\text{avoid }w)
  =
  \int_0^1\!\bigl(\tfrac{1-z^2}{2}\bigr)^{m-1}dz
  =
  \frac{2^{m-1}\bigl((m-1)!\bigr)^2}{(2m-1)!}.
\]
Stirling's formula gives the asymptotic, and \(m=(n-2)/2\) makes it \(2^{-\Omega(n)}\).
\end{proof}

\begin{figure}[t]
  \centering
  \resizebox{\columnwidth}{!}{%
  \begin{tikzpicture}[
    vertex/.style={circle,draw,fill=white,minimum size=7mm,inner sep=1pt},
    cartel/.style={circle,draw,dashed,fill=white,minimum size=7mm,inner sep=1pt},
    uedge/.style={densely dotted,gray}
  ]
    \node[vertex] (x0) {\(s\)};
    \node[vertex,right=12mm of x0] (x1) {\(x_1\)};
    \node[right=10mm of x1] (xd) {\(\cdots\)};
    \node[vertex,right=10mm of xd] (xm1) {\(x_{m-1}\)};
    \node[vertex,right=12mm of xm1] (t) {\(t\)};
    \node[vertex,above=9mm of x0] (y0) {\(y_0\)};
    \node[vertex,above=9mm of x1] (y1) {\(y_1\)};
    \node[above=9mm of xd] (yd) {\(\cdots\)};
    \node[vertex,above=9mm of xm1] (ym1) {\(y_{m-1}\)};
    \node[cartel,below=15mm of xd] (w) {\(w\)};
    \foreach \v in {x0,x1,xm1,t,y0,y1,ym1} \draw[uedge] (w)--(\v);
    \draw (x0)--(x1)--(xd)--(xm1)--(t);
    \draw (x0)--(y0);
    \draw (x1)--(y1);
    \draw (xm1)--(ym1);
  \end{tikzpicture}
  }
  \caption{The counterexample of \cref{prop:local-failure}.
  The cartel vertex \(w\) is adjacent to every other vertex.
  At each \(x_i\) with \(i\leq m-2\), taking the leaf \(y_i\) forces the next step into \(w\).}
  \label{fig:local-counterexample}
\end{figure}

Because \(\{w\}\) is admissible, \cref{thm:lower-bound} gives \(\RPFR\) an avoidance probability of at least \((n-1)^{-1}\) on this very graph, against \(2^{-\Omega(n)}\) for local \(\HS\).
The proof of \cref{thm:lower-bound} fails locally because a high-scored honest vertex elsewhere in \(F(S)\) cannot protect the current vertex from a low-scored local trap.

\subsection{A topology-aware baseline}

The guarantee is weak against a rule that already knows \(G\).
For nonadjacent \(s,t\), let \(\kappa=\kappa(s,t)\) be the maximum number of internally vertex-disjoint \(s\)--\(t\) paths.
A controller holding the topology can select one of these paths uniformly at random, and since a \(c\)-cartel meets at most \(c\) of them, it avoids the cartel with probability at least \(1-c/\kappa\).
When \(\kappa>c\), this can be far larger than \(\binom{n-1}{c}^{-1}\).

\(\RPFR\) is not competing in that setting: it discovers adjacency only from visited vertices, so it cannot enumerate disjoint paths before routing, and \cref{thm:lower-bound} needs no connectivity input at all.
Where a trusted and current map of the network is available, disjoint-path selection dominates; the value of a topology-oblivious rule lies in the cases where it is not.

\subsection{State and cost}

\(\RPFR\) requires persistent execution state but not stateful relays.
An authenticated token can carry the visited set, discovered topology and frontier, random ranking, and enough predecessor information to move through \(G[S]\).
Relays can therefore be stateless, at the cost of a token that may grow to \(O(n+\lvert E\rvert)\) size.
With adjacency lists, a frontier-membership set, and a max-priority queue, one execution takes \(O(\lvert E\rvert+n\log n)\) time, stores \(O(n+\lvert E\rvert)\) topology, expands at most \(n-1\) vertices, and traverses \(O(n^2)\) edges when relocation follows a maintained spanning tree of \(G[S]\).
These are implementation upper bounds, and at the worst-case threshold \(k=\Theta\!\binom{n-1}{c}\) both inherit the binomial factor.

\bibliographystyle{plainnat}
\bibliography{refs}

\typeout{get arXiv to do 4 passes: Label(s) may have changed. Rerun}
\end{document}